\documentclass[12pt,notitlepage]{article}
\usepackage{hyperref} 
\usepackage[authoryear,round]{natbib}
\usepackage{amssymb,amsmath,multirow,graphicx}
\usepackage{extarrows}
\usepackage{amsthm}
\usepackage{MnSymbol}
\usepackage{indentfirst}
\usepackage{afterpage}
\usepackage{graphicx}
\usepackage{subfigure}
\usepackage{epsfig}
\usepackage{epsf}    
\usepackage{setspace}  
\usepackage{ifthen}

\newcommand{\nothing}[1]{} \nothing{This allows you to make long comments}

\newtheorem{proposition}{Proposition}
\newtheorem{theorem}{Theorem}

\newcommand{\be}{\begin{equation}}
\newcommand{\ee}{\end{equation}}

\begin{document}

\title{Artificial Superintelligence May be Useless: \\ Equilibria in the Economy of Multiple AI Agents}

\author{Huan Cai {\footnote { Department of Economics and Business, Cornell College, Mount Vernon, IA, 52314.}}\,~~
Ziqing Lu {\footnote{Program of Applied Mathematical and Computational Sciences, University of Iowa, Iowa City, IA 52242.}}\,~~
Catherine Xu {\footnote{Iowa City Math Circle and Club, Iowa.}}\,~~
Weiyu Xu {\footnote{Department of Electrical and Computer Engineering, University of Iowa, Iowa City, IA 52242. Email: \texttt{weiyu-xu@uiowa.edu}}}\,~~
Jie Zheng {\footnote{Center for Economic Research, Shandong University, Shandong Province, China.}}
\footnote{The authors are listed alphabetically. }}
\maketitle

\begin{abstract}

With recent development of artificial intelligence, it is more common to adopt AI agents in economic activities. This paper explores the economic actions of agents, including human agents and AI agents, in an economic game of trading products/services, and the equilibria in this economy involving multiple agents.  We derive a range of equilibrium results and their corresponding conditions using a Markov chain stationary distribution based model. One distinct feature of our model is that we consider the long-term utility generated by economic activities instead of their short-term benefits. For the model consisting of two agents, we fully characterize all the possible economic equilibria and conditions. Interestingly, we show that unless each agent can at least double (not merely increase) its marginal utility by purchasing the other agent's products/services, purchasing the other agent's products/services will not happen in any economic equilibrium.  We further extend our results to three and more agents, where we characterize more economic equilibria. We find that in some equilibria, the  ``more powerful'' AI agents contribute zero utility to ``less capable'' agents.   
\end{abstract}

Keywords: Nash Equilibrium, Artificial Intelligence (AI), Adoption, Markov Chain


\newpage

\section{Introduction}

When the adoption of advanced AI agents becomes more common, new theories in economics are needed to understand the economic implications of AI agents \citep{hadfield2025economy}. In the new economy, the AI agents may act autonomously, collaborate, and compete with each other and humans. At the current rate of progress in adopting AI agents, it is important for us to investigate the interplays among these AI agents and humans, and to understand the economic equilibria among them.  This will help us best evaluate the benefits and risks of adopting AI agents for humans, from an economic perspective \citep{hammond2025multi}.

There are many open questions in studying the effects of AI agents in economics. AI agents, and AI technologies in general, can greatly increase productivity for given tasks and provide better products and services.  However, from the perspective of multiple humans and AI agents, it is not well understood how these AI agents or AI technologies will affect the economic actions of humans and other AI agents. Nor do we know how diverse AI agents' involvements in economy will contribute to the utilities obtained by various other humans and AI agents.   
 
On the surface, it seems always beneficial for an AI gent or a human to use the help of other powerful AI agents and to adopt AI technologies for boosting productivity or for increasing its own obtained utility. But a fundamental question remains: is it always optimal to adopt AI agents when they are available? As a consumer, a producer, or any player in this new economy of multiple AI agents, how does such an AI agent or human rationally decide whether, when, from whom, and at what price, to purchase the services/products? 

This paper aims to answer these questions by investigating the economic interplays among AI agents and human agents. Different from earlier research on AI agents \citep{goldfarb2019digital,abrardi2022artificial,agrawal2023similarities,kapoor2024ai,li2024more}, instead of investigating the short-term interplays among various agents, we mainly look into the long-term economic interplays and derive their asymptotic (as the number of time epsisodes goes to infinity) utility obtained through a Markov-chain stationary distribution based theoretical model. Using a Markov chain stationary distribution based model for human and AI agents, we derive a range of equilibrium results and the conditions for the equilibria to hold. 

In \citet{cai2023ai}, the authors developed a theory of real price to determine whether any player should adopt certain AI technologies. Using the asymptotic analysis of a Markov chain transition probability matrix modeling the network of AI technologies providers and consumers, this theory shows that the benefits of adopting AI depend on the structure of the AI-producer-consumer network and the utility matrix that measures the benefits provided by AI technologies. In this paper, building on the Markov chain model  provided by this theory, we investigate a multiple-agent economic game using Nash equilibrium analysis. Our model demonstrates that, in an economic equilibrium, the decisions of multiple agents can be quite different from those when they only consider short-term benefits.  One particular interesting result is that if the artificial superintelligence (the ``more powerful'') agents are more powerful than the ``less capable'' AI agents and human agents, in any economic equilibrium, the artificial superintelligence agents will contribute no economic benefits to the ``less capable'' AI agents or human agents. The economic intuitions of these findings provide theoretical guidance for human consumers to make important decisions of adopting AI agents. They also shed light on finding better ways to evaluate the economic contributions from AI agents and make sure that AI agents are well aligned with humans' economic interests.

We first consider the setting of two agents (whether they are AI or humans) and we are able to fully characterize every possible equilibrium of the economic interplays between these two agents. First, we show that it is not always optimal to adopt AI agents (Proposition \ref{equilibrium0} and \ref{equilibrium_optimalwhenp=q=0}) even if the AI agents provide higher utility to buyers. In particular, if an agent cannot at least double its marginal utility by switching from self-production to purchasing products/services of the other agent, such an agent is better off not adopting the products/services of the other agent. More interestingly, even when the ``doubling marginal utility'' conditions are satisfied for both agents, there are three types of  equilibria: a bilateral partial adoption equilibrium where an agent adopts both outside production (provided by the other agent) and  its own self-production (Proposition \ref{equilibrium_optimal}), a bilateral full adoption equilibrium where both agents choose to adopt the products/services of the other agent without any self production (Proposition \ref{equilibrium_optimalwhenp=q=1}), and a unilateral full adoption equilibrium where only one agent chooses to adopt the other agent's products/services without any self-production (Proposition \ref{equilibrium_optimalwhenp=1} and \ref{equilibrium_optimalwhenq=1}). In addition, social welfare is maximized under the full adoption equilibrium. 

Furthermore, we extend our model to interesting equilibria when there are more than two agents. All the results show that the economic benefits brought by the AI agents depend on the producer-consumer network structures and the resulting stationary distribution of currency flow moded by a Markov chain, not only on the myopic single-step benefits of adopting AI agents.  We show that when some ``more powerful'' agents (such as agents with artificial superintelligence) are much more powerful than ``less capable'' agents,  the ``less capable'' agents will not adopt the products/services of the ``more powerful'' agents, in an economic equilibrium, and thus the ``more powerful'' agents contribute zero economic utility to the ``less capable'' agents.

This paper is organized as follows. In Section \ref{Sec:mathmodelMarkov}, we introduce our distinct mathematical model which uses the stationary distribution of the Markov chain probability matrix to evaluate the economic benefits of adopting AI agents, where the Markov chain probability transition matrix describes economic interplays between different agents. In Section \ref{Sec:mathmodel}, we fully characterize the Nash equilibria of two agents' economic interplays. In Sections \ref{sec:stationaryfor3agent} and \ref{sec:3agentequilibrium}, we characterize interesting equilibria for more than 2 agents. In Section \ref{sec: conclusions}, we discuss our results, and conclude this paper by providing summaries and future directions. 

\section{Mathematical Model}\label{Sec:mathmodelMarkov}
We consider $n$ agents, where $n$ is a positive integer. Each agent can function as a producer and also as a consumer. Note that an ``agent'' in our model may represent a human agent, an AI agent, or even an entity of collaborative AI/human agents (for example, the collection of all the people in a country can be modeled as an agent in international trade; the collection of all the AI agents owned and controlled by a technology company can be modeled a single ``big'' agent in providing products/services). 
We consider a spending matrix $P\in \mathbb{R}^{n\times n}$, where $P$ is a Markov chain transition ``probability'' matrix. We denote the element in the $i$-th row and $j$-th column of $P$ as $P_{ij}$. For matrix $P$, we have
$\sum_{i=1}^{n}P_{ij}=1$ for every $1 \leq j \leq n$. The number $P_{ij}$ denotes the fraction of the $j$-th player's currency it spends on purchasing the products/services from the $i$-th agent. Note that the numbers in the Markov chain transition probability matrix do not represent probabilities in their physical meanings: they just denote the fraction of one agent's money spent on purchasing products/services from another agent.  

We use $x^t \in \mathbb{R}^{n\times 1}$ to denote the currency vector at time index $t$, where $t\geq 0$ is an integer.  We assume that each element of $x^t$ is non-negative, representing the currency possessed by each agent at time index $t$. We refer to the time interval between consecutive time indices as one episode.  We  have a utility matrix  $U\in \mathbb{R}^{n\times n}$ whose elements are non-negative. We denote the element in the $i$-th row and the $j$-th column of $U$ as $U_{ij}$, which represents the utility the $i$-th player provides to the $j$-th agent if the $j$-th agent spends one dollar purchasing products/services from agent $i$.   We assume that no agent changes their spending pattern as time evolves, namely a player maintains the relative ratios of spending on each player.

 On the surface, to maximize the utility of the $j$-th agent, the $j$-th agent should spend all its currency on the agent who provides the highest utility for each dollar spent on it, namely the agent $i$ with the largest $U_{ij}$. However, counter-intuitively we argue that this is not true, and the optimal spending practice should depend on the structure of the network of producers and consumers. We notice that at the time index $(t+1)$,  the currency vector is updated to 
$$ x^{t+1}=Px^{t}.$$

\subsection{An agent's optimal strategy maximizing asymptotic long-term utility}
\label{optimization}
We consider the optimal strategy for an agent to maximize its asymptotic long-term utility, namely the utility per episode as time index goes to infinity.  For the purpose of explanation, following \citet{cai2023ai},  without loss of generality, we focus on the first agent and try to determine its optimal spending strategy, namely determining the first column of matrix $P$. Our goal is to maximize the utility obtained by the player $1$ over time.  

We assume that the matrix $P$ is an irreducible Markov chain transition matrix. This condition is often satisfied. For example, a sufficient condition for $P$ to be irreducible is that both of the following conditions hold: (1) for each $j$, $1\leq j \leq n$,  we have the submatrix $P_{/j,/j}$ is irreducible, where $P_{/j,/j}$ is the submatrix of $P$ excluding the $j$-th row and $j$-th column; (2) for each $j$, $1\leq j \leq n$, there exists an agent $k\neq j$ such that $P_{jk}>0$ and there exists an agent $l\neq k$ such that $P_{lj}>0$.

From $x^{t+1}=Px^{t}$, the sum of the elements of the currency vector $x^{t}$'s remains constant as time index $t$ increases. We can scale $x^t$ to $(x')^t$ such that the sum of the elements of $(x')^t$ is equal to 1,  and then view the vector $(x')^t$ as a probability distribution vector (even though the vector represents scaled currency amount rather than probability distribution in its physical meaning). One can then view $P$ as the transition probability matrix of a Markov chain. Due to irreducibility of the Markov chain transition probability matrix, this Markov chain will have a unique stationary distribution. If this Markov chain is further aperiodic, $(x')^t$ will converge to this unique stationary probability distribution vector $x'$ satisfying $Px'=x'$. This means that the currency vector $x^t$ will converge to a unique stationary currency vector $x$ satisfying $Px=x$. Even if the Markov chain is not aperiodic, by the ergodic theorem for irreducible Markov chains, we know the proportion of the time the Markov chain spends on each state converges to that unique stationary distribution as time index increases, with high probability, no matter what the starting distribution $(x')^0$ is.  So when the Markov chain is irreducible, one can always use its unique stationary distribution to calculate the agent's asymptotic utility per episode.

Note that in the stationary distribution, agent $j$ has stationary current amount $x_j$. Because it still has the spending pattern described by $P_{ij}$'s, the average asymptotic utility is $ x_j\sum_{i=1}^{n} P_{ij} U_{ij}$. In order to maximize the utility of the $j$-th agent under the stationary currency vector, we formulate the following optimization problem (\ref{Defn:utilitymaximization}): 
\begin{minipage}[t]{0.48\textwidth}
\begin{align}\label{Defn:utilitymaximization}
&\underset{x,~P_{:j}}{\rm maximize\ }  x_j\sum_{i=1}^{n} P_{ij} U_{ij}\nonumber\\
&{\rm subject\ to\ }~~~~~x\geq 0, \nonumber \\
&~~~~~~~~~~~~~~~~~~~Px=x, \nonumber \\
&~~~~~~~~~~~~~~~~~~~\|x\|_1=\|x^{0}\|_1,\nonumber\\
&~~~~~~~~~~~~~~~~~~~P_{:j}\geq 0,
\end{align}
\end{minipage}
\hfill
\begin{minipage}[t]{0.48\textwidth}
\begin{align}\label{unit_utilitymaximization}
&\underset{x,~P_{:j}}{\rm maximize\ }  x_j\sum_{i=1}^{n} P_{ij} U_{ij}\nonumber\\
&{\rm subject\ to\ }~~~~~x\geq 0, \nonumber \\
&~~~~~~~~~~~~~~~~~~~Px=x, \nonumber \\
&~~~~~~~~~~~~~~~~~~~\|x\|_1=1,\nonumber\\
&~~~~~~~~~~~~~~~~~~~P_{:j}\geq 0.
\end{align}
\end{minipage}
 where $P_{:j}$ is the $j$-th column of $P$, and $\|x\|_1$ represents the $\ell_1$ norm of $x$, namely the sum of the absolute values of elements in $x$. For simplicity, we enforce that the elements of $x$ be non-negative. In this case, the vector $x$ (the stationary currency amount) represents the scaled stationary distribution of the probability transition matrix $P$. Because the system reaches the scaled stationary probability distribution, we must have the second constraint $Px=x$. Because the sum of the elements of $x$ does not change over time, we have $\|x\|_1=\|x^{0}\|_1$. Since $x_j$ is the stationary currency amount agent $j$ will possess in the stationary distribution, we have the objective function in (\ref{Defn:utilitymaximization}).

 Without loss of generality, we take $\|x^{0}\|_1=1$.  Under this assumption, we can change (\ref{Defn:utilitymaximization}) to (\ref{unit_utilitymaximization}). In fact,  (\ref{unit_utilitymaximization}) is a non-linear optimization problem since the objective function involves products of $x_{j}$ and the elements of $P$. To numerically solve (\ref{unit_utilitymaximization}), one can treat  (\ref{unit_utilitymaximization}) as a polynomial optimization problem, and transform it into a sum of squares optimization formulation \citep{parrilo2004sum} leading to a related semidefinite programming problem which can be solved using algorithms for semidefinite programming.

 We further observe that (\ref{unit_utilitymaximization}) becomes a linear programming problem in the other elements of $x$ and the elements of $P$, if we fix a certain value for $x_j$. This observation leads to the following numerical algorithm which avoids using semidefinite programming for polynomial optimization: the semidefinite programming can be time-consuming for large $n$.  In this algorithm, we solve (\ref{unit_utilitymaximization}) by searching over possible values of $x_j$ (for example, grid search). For each examined value of $x_j$, we solve the corresponding linear programming problem. Then among all the values for $x_j$, we select the one which gives the highest objective function value.  Because we only do grid search (or other types of search) over one-dimensional variable $x_j$, and $x_j$ is bounded between $0$ and $1$, this algorithm has relatively low computational complexity.

\section{Equilibria for the two-agent model}\label{Sec:mathmodel}

We first consider a two-agent model where $n=2$. Each agent decides how much money to spend on the other agent's products and services. For every dollar of currency owned by each agent, agent $A$ spends $p$ dollar on agent $B$'s products/services, and agent $B$ spends $q$ dollar on agent $A$'s products/services. The rest of their currency is spent on self-production. That is: agent $A$ spends $1-p$ proportion of its money on itself, and agent $B$ spends $1-q$ proportion of its money on producing/purchasing products/services locally or themselves. An agent's (say, Agent $A$) self-production can also be thought of in the following way:  Agent $A$ purchases products/services from another agent (say, Agent $C$) while Agent $C$ spends back all its currency on purchasing products/services from Agent $A$. 

Thus, we have a spending matrix $P\in \mathbb{R}^{2\times 2}$. It is a Markov chain transition probability matrix:

$$P=
\begin{pmatrix}
    1-p & q\\
    p & 1-q
\end{pmatrix}.$$

We further assume that each agent's spending generates different levels of utility due to different product qualities. When agent $A$ spends currency on itself (which we call as ``self-production'', namely Agent $A$ preforms the production or services itself) , it receives utility of $a$ for each dollar spent; when it spends currency on agent $B$, the  obtained utility is $c$ per spent dollar. Similarly, when agent $B$ spends on itself, it receives per-dollar utility  of $d$; when it spends currency on agent $A$, the per-dollar utility is $b$.  

Thus, we have a utility matrix $Q \in \mathbb{R}^{2\times 2}$ such that each element represents the utility generated from an agent spending one dollar on another agent or itself:

$$U=
\begin{pmatrix}
    a & c\\
    b & d
\end{pmatrix}.$$


Therefore, for every spent dollar, agent $A$ will receive the following amount of utility:
$$u(A)=pb+(1-p)a,$$
and agent $B$ will receive the following amount of utility:
$$u(B)=qc+(1-q)d.$$

\begin{figure}[htb]
\begin{center}
\includegraphics[width=\textwidth]{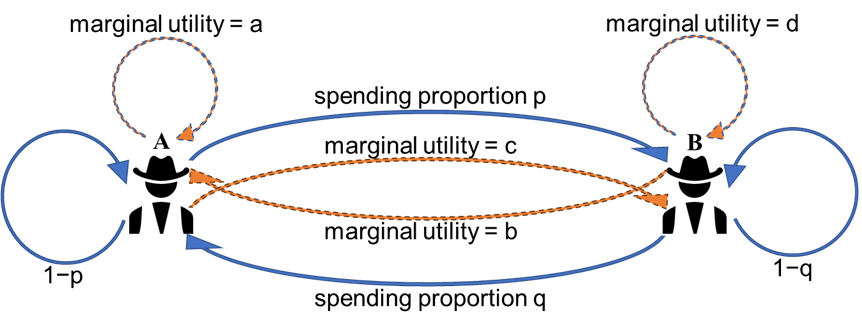}
\caption{An illustration of the two-agent trading game. Agent $A$ spends $p$ proportion of its currency on purchasing products/services from Agent B, and for each dollar spent, Agent $A$ gets marginal utility $b$. It spends $(1-p)$ proportion of its currency on purchasing  products/services from self-production (namely Agent $A$ preforms the production or services itself), and for each dollar spent, Agent $A$ gets marginal utility $a$. Agent $B$ spends $q$ proportion of its currency on purchasing products/services from Agent A, and for each dollar spent, Agent $B$ gets marginal utility $c$. It spends $(1-q)$ proportion of its currency on purchasing  products/services from self-production, and for each dollar spent, Agent $B$ gets marginal utility $d$. An agent's (say, Agent $A$) self-production can also be thought of trading with another dedicated agent:  the agent (Agent $A$) purchases products/services from another hidden agent (say, Agent $C$) which spends back all its currency on purchasing products/services from Agent $A$. }
\label{fig:2playergame}
\end{center}
\end{figure}

\subsection{Stationary Distribution}

For the two-agent example, we use $W^t \in \mathbb{R}^{2\times 1}$ to denote the currency vector at time index $t$, where $t\geq 0$ is an integer.  We assume that each element of $W^t$ is non-negative, representing the possessed currency by each agent at time index $t$. We refer to each time index as one episode.  

We also suppose that  no agent changes their spending pattern as time evolves, namely an agent maintains the relative ratios of spending on each agent.
Without loss of generality, we first use the properties of Markov chain transition matrix to find the stationary distribution (or stationary currency amount). Our goal is to maximize the utility obtained by each agent over time. We assume that at time index $0$, agent $A$ and $B$ has currency $1/2$ respectively (most of our propositions do not need this assumption and work for general initial currency distribution).


In the stationary stage, agent $A$ has currency:
$$W(A)=\frac{q}{p+q},$$
while agent $B$ has currency:
$$W(B)=\frac{p}{p+q}.$$

Therefore, with stationary distribution, agent $A$ has following the utility per episode:
$$V(A)=u(A)W(A)=\frac{q[pb+(1-p)a]}{p+q},$$
and agent $B$ has the following utility:
$$V(B)=u(B)W(B)=\frac{p[qc+(1-q)d]}{p+q}.$$

Next we use these results to find the Nash equilibrium of spending patterns for agents $A$ and $B$ such that none of them can change their spending patterns to improve their asymptotic utility. 

\subsection{Nash Equilibrium}

In the game, each agent decides what actions (spending patterns) to take to maximize the utility given the other agent's action: agent $A$ chooses his spending $p$ given agent $B$'s choice of $q$, whereas agent $B$ chooses her spending $q$ given agent $A$'s choice of $p$. Suppose that the current value for $(p, q)=(p^*, q^*)$. Therefore, we formulate the following optimization problem for agent $A$:
\begin{align}\label{Defn:utilitymaximization_A}
&\underset{p}{\rm max\ }  V(A)=\frac{{q^*}[pb+(1-p)a]}{p+{q^*}}\nonumber\\
&{\rm subject\ to\ }~~~~~0\leq{p}\leq{1}, \nonumber
\end{align}
where ${q^*}$ is given by agent $B$.

Similarly, for agent $B$, the optimization problem is:
\begin{align}
&\underset{q}{\rm max\ }  V(B)=\frac{{p^*}[qc+(1-q)d]}{{p^*}+q}\nonumber\\
&{\rm subject\ to\ }~~~~~0\leq{q}\leq{1}, \nonumber 
\end{align}
where ${p^*}$ is given by agent $A$.

\begin{proposition}\label{equilibrium0}
$p^*=q^*=0$ is always an equilibrium.
\end{proposition}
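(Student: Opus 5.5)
We verify directly that neither agent gains by a unilateral deviation from $(p^*,q^*)=(0,0)$. The main issue is that the stationary-currency formulas $W(A)=q/(p+q)$ and $W(B)=p/(p+q)$ are undefined at $p=q=0$. Moreover, when $p=q=0$ the matrix $P$ is the identity, so it is not irreducible and its stationary distribution is not unique. So the first step is to fix what $V(A)$ and $V(B)$ mean at this profile and along unilateral deviations from it.

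At $p=q=0$ there is no trade, so the currency vector never moves. Each agent keeps its initial currency, $W(A)=W(B)=1/2$ under the paper's initial assumption, or more generally $W^0(A),W^0(B)$. Each agent spends everything on self-production, so $V(A)=a/2$ and $V(B)=d/2$, both nonnegative because the utility entries are nonnegative.

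Next we consider a unilateral deviation by $A$ to some $p\in(0,1]$ while $q^*=0$. The chain then has $B$ absorbing: each episode $A$ keeps only the fraction $1-p$ of its currency, so $W^t(A)=(1-p)^t W^0(A)\to 0$ and all currency drains to $B$. This matches the stationary formula $W(A)=q^*/(p+q^*)=0$ for $p>0$. Hence $A$'s asymptotic per-episode utility is $V(A)=0\cdot[pb+(1-p)a]=0\le a/2=V(A)|_{p=0}$, and $A$ has no profitable deviation.

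The argument for $B$ is symmetric. Deviating to $q>0$ against $p^*=0$ gives $W(B)=p^*/(p^*+q)=0$, so $V(B)=0\le d/2$. Since neither agent can strictly improve its asymptotic utility, $(0,0)$ is a Nash equilibrium for all nonnegative $a,b,c,d$.

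The main obstacle is making the value at $p=q=0$ rigorous, because the closed-form $V$ is $0/0$ there. The cleanest route is to compute the limiting per-episode utility directly from the dynamics $W^{t+1}=PW^t$, as above, rather than from the stationary formula. This gives $a\,W^0(A)$ at the equilibrium and $0$ under any deviation. If one instead insisted on the continuous extension of the formula, then deviations still yield $0$, and the equilibrium value is at least $0$, so the conclusion is unchanged. Only weak inequality is needed, and it holds even when $a=0$ or $d=0$.
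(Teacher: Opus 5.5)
Your proposal is correct and follows the paper's proof. In both, a unilateral move to positive outside spending, against a partner who buys nothing, drives the deviator's stationary currency to zero and hence its utility to $0$, which is no better than the nonnegative self-production payoff $a\,W^0(A)$ (resp.\ $d\,W^0(B)$) at $(0,0)$. You go further than the paper in one respect: you handle the degenerate case through the dynamics $W^t(A)=(1-p)^tW^0(A)$, where the paper's formula $q/(p+q)$ is undefined, which is a useful bit of added rigor the paper glosses over.
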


\begin{proof}
When $p^*=0$, if $q^*>0$, we will have $V(B)=0$; if $q^*=0$, we have $V(B)=d I$ (where $I$ is the amount of currency Agent $B$ has at time index $0$) because agent $B$ spends money only on its self-production and always keeps the money.
When $q^*=0$, if $p^*>0$, we will have $V(A)=0$; when $p^*=0$, we have $V(A)=aJ$ (where $j$ is the amount of currency Agent $B$ has at time index $0$).
Thus $(p^*, q^*)=(0,0)$ is always an equilibrium. 
\end{proof}

\begin{proposition}\label{equilibrium_optimal}
When $b \geq 2a$ and $c \geq 2d$, there exists an equilibrium with ${p^*}=\frac{d}{c-d}$ and ${q^*}=\frac{a}{b-a}$.  Moreover, if $0<p^*<1$ and $0< q^*<1$ happens in an equilibrium, we must have $b>2a$ and $c>2d$. 
\end{proposition}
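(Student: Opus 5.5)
Agent $A$'s best-response problem, for fixed $q^*>0$, is
\[
\max_{0\le p\le 1} f(p)=\frac{q^*[a+(b-a)p]}{p+q^*}.
\]
I would differentiate it directly, and treat $B$'s problem symmetrically by swapping $p\leftrightarrow q$, $a\leftrightarrow d$, $b\leftrightarrow c$. The quotient rule gives
\[
f'(p)=\frac{q^*[(b-a)(p+q^*)-(a+(b-a)p)]}{(p+q^*)^2}=\frac{q^*[(b-a)q^*-a]}{(p+q^*)^2}.
\]
The numerator does not depend on $p$, so $f$ is monotone on $[0,1]$. It is increasing when $q^*>a/(b-a)$, decreasing when $q^*<a/(b-a)$, and constant exactly when $q^*=a/(b-a)$ (assuming $b>a$). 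In the constant case every $p\in[0,1]$ is a best response for $A$. By symmetry, if $p^*=d/(c-d)$ with $c>d$, every $q\in[0,1]$ is a best response for $B$.

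For the existence part, I would set $q^*=a/(b-a)$ and $p^*=d/(c-d)$ and check feasibility. The condition $b\ge 2a$ gives $b-a\ge a$, so $q^*\le 1$. Similarly $c\ge 2d$ gives $p^*\le 1$. Both values are nonnegative.

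With these values each agent is indifferent over all of its strategies, so in particular the prescribed values are best responses and $(p^*,q^*)$ is a Nash equilibrium. Some care is needed at the boundary:
\begin{itemize}
\item If $a=0$ (resp.\ $d=0$), then $q^*=0$ (resp.\ $p^*=0$), and the expression for $f$ needs the convention used in Proposition \ref{equilibrium0}. In that case the pair reduces to, or borders on, the trivial equilibrium.
\item If $a=b=0$, the ratio $a/(b-a)$ is undefined. I would exclude this degenerate case, or handle it by that same convention.
\end{itemize}

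For the converse, suppose an equilibrium has $0<p^*<1$ and $0<q^*<1$. Then $p^*$ is an interior maximizer of the monotone function $f$, which is only possible if $f$ is constant. Hence $(b-a)q^*=a$. If $b\le a$, the left side is $\le 0$; taking $a>0$ this forces a contradiction, so $b>a$ and $q^*=a/(b-a)$. The condition $q^*<1$ then gives $a<b-a$, i.e.\ $b>2a$. Symmetrically, $B$'s interior optimality gives $p^*=d/(c-d)<1$, i.e.\ $c>2d$.

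The main obstacle is the degenerate boundary cases, $a=0$ or $d=0$, and $p+q=0$. For example, if $a=0$ then constancy forces $bq^*=0$, so $b=0$. The strict conclusion $b>2a=0$ then fails, which means the converse as stated implicitly requires $a,d>0$ (or positive utilities). I would state that assumption explicitly and otherwise follow the derivative argument above, which is routine.
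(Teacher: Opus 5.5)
Your proposal is correct and follows essentially the same route as the paper. Both compute the derivative $\frac{q^*[(b-a)q^*-a]}{(p+q^*)^2}$, derive the indifference conditions $q^*=\frac{a}{b-a}$ and $p^*=\frac{d}{c-d}$, and convert $q^*,p^*\le 1$ (strictly $<1$ in the interior case) into $b\ge 2a$, $c\ge 2d$; your point that the numerator does not depend on $p$ (so $V(A)$ is strictly monotone or constant) makes explicit the sufficiency step the paper leaves implicit, and you are right that the converse must exclude $a=b=0$ (or $c=d=0$), where every $p$ is a best response and $b>2a$ fails.
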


\begin{proof}




For the given $(p^*, q^*)$ to be an equilibrium, we have the derivative of $V(A)$ at $p^*$ is $0$. Namely for  agent $A$:
\begin{align}
\frac{dV(A)}{dp}=0&\implies\frac{{q^*}[(b-a)(p+{q^*})-pb-(1-p)a]}{(p+{q^*})^2}=0 \nonumber\\
&\implies\frac{{q^*}[(b-a){q^*}-a]}{(p+{q^*})^2}=0\implies{q^*}=\frac{a}{b-a}, \text{or }{q^*}=0. \nonumber
\end{align}

Similarly, for agent $B$:
\begin{align}
\frac{dV(B)}{dq}=0&\implies\frac{{p^*}[(c-d)({p^*}+q)-qc-(1-q)d]}{({p^*}+q)^2}=0 \nonumber\\
&\implies\frac{{p^*}[(c-d){p^*}-d]}{({p^*}+q)^2}=0\implies{p^*}=\frac{d}{c-d}, \text{or }{p^*}=0. \nonumber
\end{align}

From this,  we see that  ${p^*}=\frac{d}{c-d}$ and ${q^*}=\frac{a}{b-a}$ indeed form an equilibrium. 

From the same derivatives being $0$, if $0<p^*<1$ and $0< q^*<1$  happen in an equilibrium, we must have $q^*=\frac{a}{b-a}$ and $p^*=\frac{d}{c-d}$. That implies $b>2a$ and $c>2d$.
\end{proof}


\begin{proposition}\label{equilibrium_optimalwhenp=1}
When $p^*=1$, and $0<q^*<1$ is an equilibrium, then the following two conditions hold true at the same time:
\begin{itemize}
\item $c=2d$;
\item $\frac{b}{a}\geq\frac{1+q^*}{q^*}>2$.
\end{itemize}

Moreover, if $c=2d$ and $\frac{b}{a} > 2$, $p^*=1$ and any $q^*$ such that $\frac{a}{b-a} \leq q^*<1$ is an equilibrium. 

\end{proposition}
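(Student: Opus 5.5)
The plan is to use the monotonicity structure already exposed in the proof of Proposition \ref{equilibrium_optimal}. For fixed $q^*>0$ we have
\[
\frac{dV(A)}{dp}=\frac{q^*\,[(b-a)q^*-a]}{(p+q^*)^2}.
\]
Its sign does not depend on $p$, so $V(A)$ is monotone in $p$ on $[0,1]$. Similarly, for fixed $p^*>0$,
\[
\frac{dV(B)}{dq}=\frac{p^*\,[(c-d)p^*-d]}{(p^*+q)^2},
\]
and $V(B)$ is monotone in $q$. Best responses are therefore determined entirely by the signs of the constants $(b-a)q^*-a$ and $(c-d)p^*-d$. We take $a>0$ throughout, which is needed for the strict inequality in the statement.

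For the necessity part, assume $(p^*,q^*)=(1,q^*)$ with $0<q^*<1$ is an equilibrium.

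\emph{Condition on $B$.} Since $p^*=1>0$ and $q^*$ is interior, $B$'s derivative must vanish (otherwise $V(B)$ would be strictly monotone and an endpoint would be strictly better). Vanishing means $(c-d)\cdot 1-d=0$, i.e.\ $c=2d$.

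\emph{Condition on $A$.} Since $q^*>0$ and $p^*=1$ is the right endpoint, $A$ must not gain by lowering $p$. This means $V(A)$ is nondecreasing, i.e.\ $(b-a)q^*-a\ge 0$, which rearranges to $b\ge a(1+q^*)/q^*$, that is $\frac{b}{a}\ge \frac{1+q^*}{q^*}$. Finally, $q^*<1$ gives $\frac{1+q^*}{q^*}=1+\frac1{q^*}>2$.

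For the sufficiency part, assume $c=2d$ and $b/a>2$, and take $q^*\in\left[\frac{a}{b-a},1\right)$. This interval is nonempty precisely because $b>2a$ gives $\frac{a}{b-a}<1$.

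\emph{$A$'s response.} With $p^*=1$, the factor $(b-a)q^*-a$ is $\ge 0$ by the choice of $q^*$. So $V(A)$ is nondecreasing in $p$, and $p=1$ is a best response.

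\emph{$B$'s response.} With $p^*=1$, the factor $(c-d)-d=c-2d$ is $0$. So $V(B)$ is constant in $q$, and every $q$, in particular $q^*$, is a best response. Hence $(1,q^*)$ is an equilibrium.

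The main obstacle I expect is rigor about the derivative argument rather than any computation. The derivative formula is valid only when $p+q>0$, and boundary cases such as $q=0$ must be handled separately. Here this is harmless because $p^*=1$ and $q^*>0$ throughout. I would state explicitly that a constant-sign derivative gives global monotonicity on $[0,1]$, so first-order conditions are exactly the equilibrium conditions. Degenerate cases such as $a=0$, where $b/a$ is undefined, should be excluded or noted as such.
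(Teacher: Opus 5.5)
Your proposal is correct and follows the paper's own proof. The paper also obtains $c=2d$ from the vanishing of $B$'s derivative at the interior $q^*$, obtains the condition on $b/a$ from the nonnegativity of $A$'s derivative at the endpoint $p^*=1$, and proves sufficiency with the same sign check. Two points in your write-up go beyond the paper, which leaves them implicit: you note that each derivative's sign does not depend on the player's own variable, so the first-order conditions give global best responses, and you make the caveat $a>0$ explicit.
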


\begin{proof}
Because $0<q^*<1$ happens in the equilibrium, for agent $B$:
\begin{align}
\frac{dV(B)}{dq}=0&\implies\frac{{p^*}[(c-d)({p^*}+q)-qc-(1-q)d]}{({p^*}+q)^2}=0 \nonumber\\
&\implies\frac{{p^*}[(c-d){p^*}-d]}{({p^*}+q)^2}=0\implies{p^*}=\frac{d}{c-d}. \nonumber 
\end{align}
Because $p^*=\frac{d}{c-d}=1$, we have $c=2d$.

Moreover, under such a $q^*$, for agent $A$, $V(A)$ is maximized when $p=p^*=1$. This would require that  the derivative of $V(A)$ with respect to $p$ is non-negative, namely
 $$q^*\geq \frac{a}{b-a}.$$

This leads to the proof of these two conditions. 

For the same reason, we can verify that if these two conditions are satisfied,  $p^*=1$ and any $q^*$ such that $\frac{a}{b-a} \leq q^*<1$  maximizes respectively $V(A)$ and $V(B)$ for agents $A$ and $B$. So we have  $p^*=1$ and any $q^*$ such that $\frac{a}{b-a} \leq q^*<1$ is an equilibrium. 

\end{proof}

\begin{proposition}\label{equilibrium_optimalwhenq=1}
When $0<p^*<1$, and $q^*=1$ is an equilibrium, then the following two conditions hold true at the same time:
\begin{itemize}
\item $a=2b$;
\item $\frac{c}{d}\geq\frac{1+p^*}{p^*}>2$.
\end{itemize}

Moreover, if $a=2b$ and $\frac{c}{d} > 2$, $q^*=1$ and any $p^*$ such that $\frac{d}{c-d} \leq p^*<1$ is an equilibrium. 
\end{proposition}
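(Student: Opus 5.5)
The plan is to mirror the proof of Proposition \ref{equilibrium_optimalwhenp=1} with the roles of the two agents exchanged. The relevant objectives are $V(A)=\frac{q^*[pb+(1-p)a]}{p+q^*}$ and $V(B)=\frac{p^*[qc+(1-q)d]}{p^*+q}$. Since $0<p^*<1$ is an interior best response of agent $A$, the first-order condition must hold there. As in the proof of Proposition \ref{equilibrium_optimal}, this derivative is
\begin{align}
\frac{dV(A)}{dp}=\frac{q^*[(b-a)q^*-a]}{(p+q^*)^2}. \nonumber
\end{align}
With $q^*=1>0$, it vanishes exactly when $(b-a)\cdot 1=a$. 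This is the single linear relation between $a$ and $b$ forced by the equilibrium, and the first bullet is meant to record it.

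Next, I use agent $B$'s side. Because $q^*=1$ sits at the right endpoint, maximality of $V(B)$ at $q=1$ requires $\frac{dV(B)}{dq}\geq 0$ there, namely
\begin{align}
p^*[(c-d)p^*-d]\geq 0. \nonumber
\end{align}
Since $p^*>0$, this is $p^*\geq \frac{d}{c-d}$ with $c>d$, which rearranges to $\frac{c}{d}\geq \frac{1+p^*}{p^*}$. The strict inequality $\frac{1+p^*}{p^*}>2$ follows from $p^*<1$. Together these give the second bullet.

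For the converse, I assume the two conditions and check best responses directly. Given $q^*=1$, the numerator of $\frac{dV(A)}{dp}$ is identically zero in $p$, so $V(A)$ is constant on $[0,1]$. Indeed, $V(A)=\frac{pb+(1-p)a}{p+1}$ is constant precisely under the relation obtained in the first step. Hence any $p^*$ is a best response for $A$. Given any $p^*$ with $\frac{d}{c-d}\leq p^*<1$, the sign of $\frac{dV(B)}{dq}$ does not depend on $q$ and is nonnegative. So $V(B)$ is nondecreasing in $q$, and $q=1$ is a best response. Therefore each such pair is an equilibrium.

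\textbf{Main obstacle.} The delicate step is the first bullet. The first-order condition yields the relation $b-a=a$, whereas the statement prints it as $a=2b$. To address the statement as worded, I would verify this single step by direct substitution of $q^*=1$ into $V(A)$, checking for which relation between $a$ and $b$ the function $\frac{pb+(1-p)a}{1+p}$ has zero derivative. I would then state the bullet in exactly the form the computation delivers. The rest of the argument does not depend on this point and goes through as described.
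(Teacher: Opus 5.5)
Your proof is correct and follows the paper's route. The paper's proof is the single remark that the argument for Proposition~\ref{equilibrium_optimalwhenp=1} carries over by symmetry, and you carry out exactly that mirrored computation: the interior first-order condition for $A$, a nonnegative endpoint derivative for $B$, and monotonicity of both objectives for the converse.

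You are also right that the first bullet, and the hypothesis of the converse, must read $b=2a$ rather than $a=2b$. This is the image of $c=2d$ under $p\leftrightarrow q$, $a\leftrightarrow d$, $b\leftrightarrow c$, and it is the form the paper itself uses in Table~\ref{tb:2player_equilibria_summary} and in the remarks after Proposition~\ref{equilibrium_optimalwhenp=q=1}. The printed version is genuinely false. For example, $a=1$, $b=2$, $c=4$, $d=1$ admits the equilibrium $(p^*,q^*)=(\tfrac12,1)$ although $a\neq 2b$. Conversely, for $a=2b>0$ and $q^*=1$ one gets $V(A)=\frac{b(2-p)}{1+p}$, which is strictly decreasing in $p$, so no interior $p^*$ is a best response. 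You should therefore say outright that you are proving the corrected statement, instead of hedging about addressing it ``as worded.''
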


\begin{proof}
By symmetry, the proof of the last proposition works for this case too.
\end{proof}

\begin{proposition}\label{equilibrium_optimalwhenp=q=1}
If $p^*=q^*=1$ is an equilibrium, we must have $b\geq 2a$ and $c\geq 2d$.  Vice versa, when $b\geq 2a$ and $c\geq 2d$, then $p^*=q^*=1$ is an equilibrium.
\end{proposition}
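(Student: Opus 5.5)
The plan is to reduce everything to the sign of the derivative of each agent's value function at the boundary $p=1$ (resp. $q=1$). The computation in the proof of Proposition \ref{equilibrium_optimal} already gives, for fixed $q^*>0$,
\begin{equation*}
\frac{dV(A)}{dp}=\frac{q^*\left[(b-a)q^*-a\right]}{(p+q^*)^2},
\end{equation*}
and similarly for fixed $p^*>0$,
\begin{equation*}
\frac{dV(B)}{dq}=\frac{p^*\left[(c-d)p^*-d\right]}{(p^*+q)^2}.
\end{equation*}
The key observation is that the numerator of $dV(A)/dp$ does not depend on $p$. Hence, for a fixed $q^*>0$, the function $V(A)$ is monotone on $[0,1]$, and its direction of monotonicity is determined by the sign of $(b-a)q^*-a$. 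The same holds for $V(B)$ in $q$.

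For the forward direction, suppose $(p^*,q^*)=(1,1)$ is an equilibrium. Fix $q^*=1$. Then $p=1$ must maximize $V(A)$ over $[0,1]$. Since $V(A)$ is monotone, this happens exactly when the derivative is nonnegative. (If the derivative is zero, $V(A)$ is constant and $p=1$ is still a maximizer.) Plugging in $q^*=1$, the numerator is $(b-a)-a=b-2a$, so we need $b\ge 2a$. The symmetric argument with $p^*=1$ gives $(c-d)-d\ge 0$, that is, $c\ge 2d$.

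For the converse, assume $b\ge 2a$ and $c\ge 2d$. With $q^*=1$, the derivative of $V(A)$ in $p$ is $(b-2a)/(p+1)^2\ge 0$ for all $p\in[0,1]$, so $V(A)$ is nondecreasing and $p=1$ is a best response. Symmetrically, $q=1$ is a best response to $p^*=1$. Therefore $(1,1)$ is an equilibrium.

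The only delicate point is the endpoint $p=0$. There the formula $V(A)=q^*[pb+(1-p)a]/(p+q^*)$ is well defined because $q^*=1>0$, so there is no degenerate stationary-distribution issue of the kind seen in Proposition \ref{equilibrium0}. I therefore expect no real obstacle. The one thing to state explicitly is that the derivative computation is valid on the whole interval $[0,1]$, since $p+q^*\ge 1>0$, so the monotonicity argument holds uniformly.
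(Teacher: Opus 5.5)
Your proposal is correct and follows the same route as the paper: both reduce the best-response condition at $p=1$ (resp.\ $q=1$) to the sign of $(b-a)q^*-a$ (resp.\ $(c-d)p^*-d$) evaluated at $q^*=1$ (resp.\ $p^*=1$). Your remark that the numerator of $dV(A)/dp$ does not depend on $p$, so that $V(A)$ is monotone on $[0,1]$, makes explicit the justification the paper leaves implicit in its ``we can check'' step.
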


\begin{proof}
Firstly, for agent $A$, $V(A)$ is maximized at $p^*=1$. That would require 
$(b-a)q^*-a\geq 0.$ With $q^*=1$, this is $b\geq 2 a$. By symmetry, for agent $B$, we have 
$c\geq 2d$.

Moreover, when $b\geq 2a$ and $c\geq 2d$, we can check that $p^*=1$ and $q^*=1$ maximize the corresponding utilities for user $A$ and user $B$, proving the second part of the proposition.
\end{proof}

\textbf{Remarks}: When $c>2d$ and $b>2a$, we have three sets of equilibrium points. $(0,0)$, $(1, 1)$ or $(\frac{d}{c-d}, \frac{a}{b-a})$. ~\\

\textbf{Remarks}: When $c=2d$ and $b>2a$, $p^*=1$ and $q^*$ is an equilibrium as long as $1\geq q^*\geq \frac{a}{b-a}$. (besides another equilibrium $(0,0)$)
~\\

\textbf{Remarks}: When $b=2a$ and $c>2d$, $q^*=1$ is an equilibrium as long as $1\geq p^*\geq \frac{d}{c-d}$. (besides another equilibrium $(0,0)$)

\begin{proposition}\label{equilibrium_optimalwhenp=q=0}
Assume that $\max(a,b)>0$ and $\max(c,d)>0$. Then if $b<2a$ or $c<2d$, there is only one equilibrium with $p^*=q^*=0$.
\end{proposition}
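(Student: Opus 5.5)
The plan is to compute each agent's best-response correspondence explicitly and show that, under $b<2a$, agent $A$'s best response is always $p=0$. From that, $q^*=0$ should follow. The case $c<2d$ is symmetric under swapping the roles $(A,p,a,b)\leftrightarrow(B,q,d,c)$, so I treat only $b<2a$. Note first that $b\ge 0$ and $b<2a$ force $a>0$.

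\emph{Step 1: $A$'s best response for fixed $q^*>0$.} As in the proof of Proposition~\ref{equilibrium_optimal},
\[
\frac{dV(A)}{dp}=\frac{q^*\left[(b-a)q^*-a\right]}{(p+q^*)^2},
\]
and its sign does not depend on $p$. So $V(A)$ is monotone on $[0,1]$, and the best response is $p=0$ whenever $(b-a)q^*-a<0$.

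\emph{Step 2: under $b<2a$ this quantity is negative for every $q^*\in(0,1]$.} If $b\le a$, then $(b-a)q^*-a\le -a<0$. If $a<b<2a$, then $(b-a)q^*\le b-a<a$. Hence $V(A)$ is strictly decreasing in $p$, and the unique best response is $p^*=0$. In that case $V(A)=a>0$, because $A$ then holds all the currency.

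\emph{Step 3: $A$'s best response when $q^*=0$.} Agent $B$ never pays $A$, so any $p>0$ drains $A$'s currency and gives $V(A)=0$. Choosing $p=0$ gives $V(A)=aJ>0$, as in the proof of Proposition~\ref{equilibrium0}. So again $p^*=0$ is the unique best response. Combining Steps 1--3, every equilibrium has $p^*=0$.

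\emph{Step 4: $B$'s response given $p^*=0$.} If $q>0$, all of $B$'s currency eventually flows to $A$, so $V(B)=0$. If $q=0$, then $V(B)=dI$. When $d>0$ this forces $q^*=0$, so the only equilibrium is $(0,0)$, which is indeed an equilibrium by Proposition~\ref{equilibrium0}.

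The main obstacle is the degenerate case $d=0$. The hypothesis $\max(c,d)>0$ then only gives $c>0$. Given $p^*=0$, agent $B$ earns $0$ for every $q$, so any $(0,q^*)$ with $q^*\in(0,1]$ would satisfy the Nash best-response conditions: $A$'s best response to $q^*>0$ is still $p=0$ by Step 2. To close this gap I would check whether the hypothesis is meant to exclude it. One option is to strengthen the assumption to $a,d>0$. Another is to adopt the convention that an agent who is indifferent keeps self-production, which is consistent with the ``always keeps the money'' reasoning in Proposition~\ref{equilibrium0}. Under either reading the argument above gives uniqueness. Otherwise I would state the result as uniqueness of $p^*=0$ (respectively $q^*=0$ in the symmetric case), with $q^*$ arbitrary only when $d=0$.
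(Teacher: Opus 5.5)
Your argument is correct, and the degenerate case you flag is a genuine counterexample to the proposition as stated, not a weakness of your method. Take $a=b=c=1$ and $d=0$. The hypotheses $\max(a,b)>0$, $\max(c,d)>0$ and $b<2a$ all hold. Against any $q^*\in(0,1]$, agent $A$ gets $V(A)=q^*/(p+q^*)$, which is uniquely maximized at $p=0$. Against $p^*=0$, agent $B$ earns $0$ for every $q$, including $q=0$, where self-production yields $dI=0$. So every $(0,q^*)$ with $q^*\in(0,1]$ is an equilibrium. Symmetrically, $c<2d$ with $a=0$ admits equilibria $(p^*,0)$ with $p^*>0$.

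The paper's proof has exactly this hole. When one of $p^*,q^*$ is zero and the other positive, it asserts that the zero-mass agent ``can strictly improve this utility by spending all its currency on self-production.'' That requires $d>0$ (respectively $a>0$), and $\max(c,d)>0$ does not supply it. Your repair is the right one: uniqueness holds precisely when $d>0$ in the case $b<2a$ and $a>0$ in the case $c<2d$, for instance under the assumption $a,d>0$.

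Your route is also organized differently from the paper's. The paper splits by which of $p^*,q^*$ vanish:
\begin{itemize}
\item If both are positive, it appeals to the necessary conditions of Propositions \ref{equilibrium_optimal}--\ref{equilibrium_optimalwhenp=q=1} (interior point, one coordinate equal to $1$, both equal to $1$) to get $b\ge 2a$ and $c\ge 2d$.
\item If exactly one is zero, the zero-mass agent deviates to self-production.
\end{itemize}
You instead read off from the sign of $(b-a)q^*-a$ that $b<2a$ makes $p=0$ the unique best response to every $q^*$. In other words, self-production strictly dominates for $A$, so every equilibrium has $p^*=0$ and only $B$'s reply to $p^*=0$ remains. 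This is more self-contained, since it does not lean case by case on the earlier propositions. It yields a stronger dominance statement. It also isolates the single place where the sign of $d$ matters, which is exactly the point the paper's case split hides.
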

\begin{proof}
 Based on the discussions above,  if both $p^*$ and $q^*$ are positive and they form an equilibrium, we must have $b\geq 2a$ and $c\geq 2d$, contradicting the assumption of this proposition. 

Now we show that we cannot have an equilibrium where only one of $p^*$ and $p^*$ is $0$. We prove by contradiction. Without loss of generality, suppose that $p^*=0$ and $p^*>0$. Then in the stationary distribution, Agent $B$ will have stationary mass $0$ and thus get $0$ utility per episode. With the assumption that agent $B$ initially have non-zero ($\frac{1}{2}$) currency, Agent $B$ can strictly improve this utility by spending all its currency on self-production. This shows that $p^*=0$ and $p^*>0$ cannot form an equilibrium. Similarly, we cannot have $p^*>0$ and $q^*=0$.   Therefore, the only equilibrium is $p^*=q^*=0$.
 
\end{proof}

In summary, with two agents, all the potential equilibria are presented in Table \ref{tb:2player_equilibria_summary}.
\begin{table}[!hbt]
\begin{center}
\renewcommand\arraystretch{1.6}
    \begin{tabular}{ | c | l | c | c |}
    \hline
    Scenario & Equilibrium Results & Equilibrium Conditions & Proposition\\ \hline
    1 & No adoption ($p^*=q^*=0$) & always holds & \ref{equilibrium0} and  \ref{equilibrium_optimalwhenp=q=0} \\ \hline
    2 & Bilateral partial adoption & & \\ 
     & (${p^*}=\frac{d}{c-d}$ and ${q^*}=\frac{a}{b-a}$) & $b>2a$ and $c>2d$ & \ref{equilibrium_optimal} \\ \hline
    3 & Bilateral full adoption ($p^*=q^*=1$) & $b\geq2a$ and $c\geq2d$ & \ref{equilibrium_optimalwhenp=q=1} \\ \hline
    4 & Unilateral full adoption & & \\
     & ($p^*=1$, and $0<{q^*}<1$ & $b\leq \frac{q^*+1}{q^*}  a $ and $c=2d$ & \ref{equilibrium_optimalwhenp=1} \\
     & or, $ 0<{p^*}<1$, and $q^*=1$) & or, $b=2a$ and $c\geq \frac{p^*+1}{p^*}d$ & and \ref{equilibrium_optimalwhenq=1} \\ \hline
    \end{tabular}
\end{center}
\caption{A summary of two-agent equilibria}
\label{tb:2player_equilibria_summary}
\end{table}    


\section{Stationary distribution for three agents}
\label{sec:stationaryfor3agent}

Now we consider three-agent network trading products/services. We first derive the stationary distribution for the three-agent network under a given Markoc chain transition matrix $P$.
Note that in this section $P_{ji}$ means that the proportion of money Agent $i$ spends purchasing the services and products of Agent $j$. For every agent $i$, its resource $\sum_{j=1}^3P_{ji}=1$ (Each column sum is $1$). We let $x$ be the $3\times 1$ vector representing the stationary distribution, and thus we have $Px=x$ as follows 

\begin{align}
     \begin{pmatrix}
    P_{11} & P_{12} & P_{13} \\
    P_{21} & P_{22} & P_{23} \\
    P_{31} & P_{32} & P_{33} 
\end{pmatrix} \begin{pmatrix}
    x_1 \\ x_2 \\ x_3
\end{pmatrix}=\begin{pmatrix}
    x_1 \\ x_2 \\ x_3 \nonumber
\end{pmatrix}
\end{align}
This is equivalent to
\begin{align}
    (-P_{21}-P_{31})x_1 + P_{12}x_2 + P_{13}x_3 &= 0, \\
    P_{21}x_1 + (-P_{12}-P_{32})x_2 + P_{23}x_3 &= 0, \\
    P_{31}x_1 + P_{32}x_2 + (-P_{13}-P_{23})x_3 &= 0 , \\
    x_1 + x_2 + x_3 &= 1. \label{stationary_system}
\end{align}

We solve for the stationary expenditure $(x_1, x_2,x_3)$ with parameters $P_{ji}'s$. We first rewrite $x_3=1-x_1-x_2$, substitute $x_3$ into the first equation of (\ref{stationary_system}), and get
\begin{align*}
    &(-P_{21}-P_{31})x_1+P_{12}x_2+P_{13}(1-x_1-x_2)=0 \\
    \implies &(-P_{21}-P_{31}-P_{13})x_1+(P_{12}-P_{13})x_2+P_{13}=0.
\end{align*}
We also substitute $x_3$ into the second equation in (\ref{stationary_system}) and get
\begin{align*}
    &P_{21}x_1+(-P_{12}-P_{32})x_2+P_{23}(1-x_1-x_2)=0 \\
   \implies &(P_{21}-P_{23})x_1+(-P_{12}-P_{32}-P_{23})x_2+P_{23}=0.
\end{align*}
Then we solve for the simplified system with two variables $x_1, x_2$,
\begin{align}
\begin{cases}
    (-P_{21}-P_{31}-P_{13})x_1+(P_{12}-P_{13})x_2+P_{13}=0, \\
    (P_{21}-P_{23})x_1+(-P_{12}-P_{32}-P_{23})x_2+P_{23}=0.
\end{cases}\label{simplified-system}
\end{align}
Assume the determinant $\Delta = (P_{12}+P_{32}+P_{23})(P_{21}+P_{31}+P_{13})-(P_{12}-P_{13})(P_{21}-P_{23}) \neq 0$. To solve for $x_1$, we multiply $(-P_{12}-P_{32}-P_{23})$ to the first equation, and $(P_{12}-P_{13})$ to the second equation of the simplified system (\ref{simplified-system}). We get
\begin{equation*}
    \resizebox{\textwidth}{!}{$
    x_1
    =\Big(P_{13}(P_{12}+P_{32}+P_{23})+P_{23}(P_{12}-P_{13})\Big)\Big/\Big((P_{12}+P_{32}+P_{23})(P_{21}+P_{31}+P_{13})-(P_{12}-P_{13})(P_{21}-P_{23})\Big)
    $}.
\end{equation*}
Similarly, to solve for $x_2$, we multiply $(P_{21}-P_{23})$ to the first equation, and $(-P_{21}-P_{31}-P_{23})$ to the second equation of system (\ref{simplified-system}). We get
\begin{equation*}
    \resizebox{\textwidth}{!}{$
    x_2
    =\Big(P_{23}(P_{21}+P_{31}+P_{13})+P_{13}(P_{21}-P_{23})\Big)\Big/\Big((P_{12}+P_{32}+P_{23})(P_{21}+P_{31}+P_{13})-(P_{12}-P_{13})(P_{21}-P_{23})\Big) $}.
\end{equation*}
Therefore, the solution to the stationary expenditure is
\begin{align}
\begin{pmatrix}
    x_1\\x_2\\x_3
\end{pmatrix}=\begin{pmatrix}
    \Big(P_{13}(P_{12}+P_{32}+P_{23})+P_{23}(P_{12}-P_{13})\Big)/\Delta \\
    \Big(P_{23}(P_{21}+P_{31}+P_{13})+P_{13}(P_{21}-P_{23})\Big)/\Delta\\
    1-x_1-x_2
\end{pmatrix}.
\end{align}


\section{Equilibria for three and more agents}
\label{sec:3agentequilibrium}
In Section \ref{Sec:mathmodel}, we have found all the equilibria for the case of two agents. Now we turn to the case of three and more agents, building on the results in Section  \ref{sec:stationaryfor3agent}. 
We have the following theorems. 
\begin{theorem}
Let $n$ be the number of agents and $n\geq 3$. Then there always exists at least one equilibrium, where each agent does not spend on any other agent's products/services. 
\end{theorem}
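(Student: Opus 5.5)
The plan is to generalize the argument of Proposition \ref{equilibrium0} to $n$ agents. The candidate equilibrium is $P=I_n$: every agent spends all of its currency on self-production. Under $P=I$ the currency vector never changes, $x^t=x^0$ for all $t$. Agent $j$ therefore receives utility $x^0_j U_{jj}$ per episode, which is nonnegative. We must show that no agent $j$ can strictly increase its asymptotic utility by unilaterally changing its column $P_{:j}$ while all other columns stay equal to the standard basis vectors $e_k$, $k\neq j$.

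The key step is to analyze what happens to agent $j$'s currency after a unilateral deviation. Suppose agent $j$ chooses a column with $P_{jj}=1-s$ for some $s>0$, so that it sends a total fraction $s$ of its money to others. Every other agent $k$ keeps spending all of its money on itself, so money that reaches any $k\neq j$ never flows back to $j$. Concretely, $x^{t+1}_j=(1-s)x^t_j$, which gives $x^t_j=(1-s)^t x^0_j\to 0$. Agent $j$'s utility in episode $t$ is at most $x^t_j\max_i U_{ij}$, so its asymptotic utility per episode (and its Ces\`aro average) is $0$. This is at most $x^0_jU_{jj}$, so the deviation is not profitable. If instead $s=0$, the column is exactly $e_j$ and nothing has changed. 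Hence $P=I$ is a Nash equilibrium. This holds for every $n\geq 3$, and indeed for every $n$.

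The main obstacle is the modeling subtlety that after a deviation the chain is no longer irreducible. The earlier framework, which evaluates utility through the stationary distribution $x$ with $Px=x$, assumed irreducibility, so the stationary distribution need not be unique here. I would resolve this by computing the long-run utility directly from the dynamics $x^{t+1}=Px^t$ starting at $x^0$, as done above. One can also note that every stationary vector of the deviated matrix must satisfy $x_j=(1-s)x_j$, hence $x_j=0$. So whichever stationary distribution is selected in the optimization problem (\ref{unit_utilitymaximization}), the deviating agent's objective $x_j\sum_i P_{ij}U_{ij}$ is $0$.

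This mirrors the proof of Proposition \ref{equilibrium0}, where a deviation by one agent drives its stationary mass to zero. If one wants strict preference, assuming $x^0_j>0$ and $U_{jj}>0$ makes every deviation strictly worse; otherwise the deviation yields equal utility, which still suffices for a (weak) Nash equilibrium.
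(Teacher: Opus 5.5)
Your proposal is correct and follows the paper's proof. Both take $P=I$ as the candidate and note that any unilateral deviation drives the deviating agent's currency (its stationary mass) to zero, since no other agent buys from it, so its asymptotic utility becomes $0$, which cannot exceed $x^0_jU_{jj}\geq 0$. Your explicit recursion $x^{t+1}_j=(1-s)x^t_j$ and your remark that every stationary vector of the deviated, reducible chain has $x_j=0$ make rigorous what the paper asserts in a single line.
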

\begin{proof}
We prove that each agent choosing to purchase only its own products/services is an equilibrium. In fact, if a agent (say, Agent $1$) chooses any other spending pattern where this agent spends a non-zero proportion of its currency on purchasing another agent's services/products, then the stationary distribution for Agent $1$ will be $0$ because no other agent is  purchasing from Agent $1$.  This will make the utility obtained by Agent $1$ zero, not possibly increasing its utility. Thus each agent only spends its currency on purchasing its own products/services is an equilibrium. 

\end{proof}

\begin{theorem}
\label{thm:agentisolated}
Suppose that Agent $1$ is an agent who provides a larger utility for itself than any other agent provides for Agent $1$, per spent dollar spent by Agent $1$.  Then the following equilibrium exists:
a) Agent $1$ only purchases from itself; b) Agent $2$ and Agent $3$ do not purchase from Agent $1$; c) Agent $2$ and Agent $3$ achieve the equilibrium described in the two-agent case involving only Agent $2$ and Agent $3$. 

Moreover, in each possible equilibrium, Agent $1$ must be spending its currency purchasing its own products/services; other agents must spend $0$ currency purchasing Agent $1$'s products/services if Agent $2$ and Agent $3$ alone can reach an equilibrium where each of them has achieved a non-zero utility per episode.
\end{theorem}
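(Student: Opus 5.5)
I will split the argument into the existence part (a)--(c) and the ``moreover'' part. Throughout, utility is computed as in the model: agent $j$ receives $x_j\sum_i P_{ij}U_{ij}$ per episode, where $x$ is the stationary currency vector. The single inequality driving everything is
\[
\sum_i P_{i1}U_{i1}\le U_{11},
\]
which follows from the hypothesis $U_{11}>U_{i1}$ for $i\neq 1$ and $\sum_i P_{i1}=1$. Equality holds only when $P_{11}=1$.

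For existence, I fix the candidate profile as follows. Agent $1$ sets $P_{11}=1$. Agents $2$ and $3$ set $P_{12}=P_{13}=0$ and play one of the two-agent equilibria from Section~\ref{Sec:mathmodel}, with $p$ and $q$ the mutual spending fractions; the no-adoption equilibrium of Proposition~\ref{equilibrium0} always exists. Under this profile agent $1$ is an absorbing, closed class, so it keeps its initial currency $x_1^0$ and earns $x_1^0U_{11}$.

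Next I check the possible deviations.

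\emph{Agent $1$ deviates.} Since nobody buys from agent $1$, it can never gain currency. Hence $x_1\le x_1^0$ in the long run, and by the inequality above its utility is at most $x_1^0U_{11}$. In fact any leak $P_{11}<1$ drives $x_1\to 0$.

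\emph{Agent $2$ (or $3$) deviates by buying from agent $1$.} The money sent to agent $1$ never returns. So agent $2$'s long-run currency is no larger, and typically strictly smaller, than under the two-agent dynamics. I plan to reduce this to the two-agent problem: any spending on agent $1$ is weakly dominated by redirecting the same fraction to self-production, since the lost mass flows into the absorbing state.

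\emph{Agent $2$ (or $3$) deviates only in how it splits spending between agents $2$ and $3$.} The two-agent equilibrium already rules this out.

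For the ``moreover'' part, I take an arbitrary equilibrium and first show that agent $1$ must spend everything on itself. I would argue by contradiction. Suppose $P_{11}<1$. Either agent $1$'s stationary mass $x_1$ is $0$, and then deviating to $P_{11}=1$ keeps positive currency (given positive initial currency, as in Proposition~\ref{equilibrium_optimalwhenp=q=0}). Or $x_1>0$, and then agent $1$ should compare its current payoff with the payoff from becoming absorbing. This second case is the main obstacle, because $x_1$ changes when $P_{:1}$ changes and the stationary mass could in principle be larger under trade. Here I would use the cycle of flows: in stationarity, inflow equals outflow, $x_1(1-P_{11})=\sum_{k\ne1}P_{1k}x_k$. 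The first thing I would try is to argue, in the spirit of Theorem~\ref{thm:agentisolated}'s hypothesis, that the currency agent $1$ could retain by self-spending dominates. If that fails in general, I would restrict to the paper's long-run accounting with the initial allocation.

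Given $P_{11}=1$, agent $1$ becomes absorbing. Any agent $k$ with $P_{1k}>0$ then loses all its currency to agent $1$ and earns zero. By hypothesis, agents $2$ and $3$ could instead play a two-agent equilibrium with nonzero utility. So such $k$ would deviate, which forces $P_{1k}=0$.
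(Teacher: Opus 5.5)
Your existence argument for (a)--(c) is sound and is essentially the paper's. With nobody buying from agent~1, any leak $P_{11}<1$ gives $x_1^t=P_{11}^t x_1^0\to 0$. Any $P_{1k}>0$ with $k\in\{2,3\}$ makes $k$ transient, since agent~1 is absorbing, so $k$'s long-run currency and utility fall to $0$.

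The gap is in the ``moreover'' part, at exactly the case you flag as the main obstacle. For $P_{11}<1$ and $x_1>0$ you only list things you would try. The stationary flow identity $x_1(1-P_{11})=\sum_{k\ne 1}P_{1k}x_k$ cannot decide the comparison, because the $x_k$ change when column~1 changes. The missing idea is that making agent~1 absorbing can never lower its long-run currency:
\begin{itemize}
\item Let $P'$ be $P$ with column~1 replaced by $e_1$, and follow a single dollar. Its path under $P$ and under $P'$ agree up to its first arrival at agent~1, since only column~1 differs. Under $P'$ it stays at agent~1 afterwards.
\item Hence being at agent~1 at time $t$ under $P$ implies being there under $P'$, i.e.\ $((P')^t x^0)_1\ge (P^t x^0)_1$ for all $t$. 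Passing to the (Ces\`aro) limit gives $x_1'\ge x_1$. If $P$ is irreducible, every agent still reaches agent~1 under $P'$, so simply $x'=e_1$.
\item This is the content of the paper's remark that after the switch no currency flows out of agent~1, so its stationary mass cannot decrease.
\item Therefore $x_1'U_{11}\ge x_1U_{11}>x_1\sum_i P_{i1}U_{i1}$, where the strict inequality uses $x_1>0$ and $P_{11}<1$.
\end{itemize}
So your worry that trade might raise $x_1$ is unfounded: $P_{11}=1$ maximizes both factors of agent~1's payoff at once. The same coupling also gives $x_1'\ge x_1^0$, which confirms your $x_1=0$ case.

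Your closing step, like the paper's, also needs a qualification. Equilibria are only tested against unilateral deviations, and if both agents~2 and~3 buy from agent~1, neither can ``play the two-agent equilibrium'' on its own. The step is valid when $U_{kk}>0$ and $x_k^0>0$. Then an agent $k$ with $P_{1k}>0$ ends with zero currency, while setting $P_{kk}=1$ alone secures at least $x_k^0U_{kk}>0$. Without this condition the step fails. For instance, take $U_{22}=U_{33}=0$ with $P_{11}=1$ and $P_{12},P_{13}>0$; then no unilateral deviation by agent~2 or~3 yields positive utility. So the conditional in the statement has to be read as allowing agents~2 and~3 to coordinate.
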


\begin{proof}
We start with proving the first part. Suppose that Agent $1$ instead chooses to spend a non-zero proportion of its currency on purchasing other agents' products and services, and other agents do not change their spending patterns. Then in the stationary distribution, the currency flowing to Agent $1$ and the currency flowing from Agent $1$ to other agents must be equal. But that currency flow must be $0$ since other agents do not spend on purchasing products/services from Agent $1$. This in turn means that the money flow from Agent $1$ to other agents must be equal to $0$. So the stationary probability of Agent $1$ must be $0$, thus meaning Agent $1$ will only have $0$ utility. Thus Agent $1$ cannot change its current spending pattern to improve its utility. 

For the same reason,  Agent $2$ or $3$ cannot improve their utility by spending its currency on purchasing Agent $1$'s products and services.  Thus if Agent $2$ and Agent $3$ are already in a two-agent equilibrium involving only themselves,  the three-agent system is in an equilibrium.

Now we prove the second part. Suppose in an equilibrium, Agent $1$ spends a non-zero fraction of its currency on Agent $2$ or Agent $3$'s products/services. Then Agent $1$ can switch that fraction to buying its own products/services. By doing so, Agent $1$ will not reduce its stationary distribution probability, since there will be no currency flow going from Agent $1$ to other agents (Note that previously, the net currency flow is $0$ at Agent $1$).  In this process, the utility for Agent $1$ is increased since Agent $1$  itself provides higher products/services utility per dollar spent.  Suppose that Agent $2$ or Agent $3$ spends a non-zero fraction on Agent $1$'s products and services, then Agent $2$'s (or Agent $3$'s) stationary probability must be equal to $0$, leading to a zero utility for Agent $2$ (Agent $3$). Thus all the other agents will not spend on purchasing products/services from Agent $1$, if they can reach an equilibrium where each of them has a non-zero utility per episode.

\end{proof}

\begin{theorem}
Suppose that each agent provides unit utility for any other agent who spends one dollar purchasing its products/services, but each agent provides $0$ utility if it purchases its own products/services.  Then the following actions for each agent form an equilibrium: 
(1) Each agent spends one half of its currency on each of the other two agents; (2) Each agent spends $0$ on purchasing its own services/products. 
\end{theorem}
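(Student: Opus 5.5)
By the symmetry of the proposed profile, it suffices to check that Agent $1$ has no profitable unilateral deviation. Agents $2$ and $3$ are then handled by relabeling. I will work directly with the stationary balance equations of Section \ref{sec:stationaryfor3agent} rather than the general closed form, since the structure here is very simple.

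First I verify the candidate profile. With $P_{ij}=\frac12$ for $i\neq j$ and $P_{ii}=0$, the matrix $P$ is doubly stochastic and irreducible. Hence its unique stationary vector is $x=(\frac13,\frac13,\frac13)$. Since $U_{ij}=1$ for $i\neq j$ and $U_{ii}=0$, Agent $1$'s asymptotic utility is $x_1\sum_i P_{i1}U_{i1}=x_1(1-P_{11})=\frac13$.

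Next, fix the columns of Agents $2$ and $3$ at their equilibrium values and let Agent $1$ choose an arbitrary column $(P_{11},P_{21},P_{31})$. Its utility is $x_1(1-P_{11})$, whatever the split between $P_{21}$ and $P_{31}$. The key step is to compute $x_1$ by a flow-balance (cut) argument at node $1$, which avoids solving the full system:
\begin{itemize}
\item The inflow to Agent $1$ is $\frac12 x_2+\frac12 x_3=\frac12(1-x_1)$.
\item The outflow from Agent $1$ is $(1-P_{11})x_1$.
\end{itemize}
Equating the two gives
\begin{equation*}
x_1=\frac{1}{3-2P_{11}},
\end{equation*}
independently of $P_{21},P_{31}$. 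Agent $1$'s utility is therefore
\begin{equation*}
f(P_{11})=\frac{1-P_{11}}{3-2P_{11}},
\end{equation*}
and a one-line derivative computation gives
\begin{equation*}
f'(P_{11})=\frac{-1}{(3-2P_{11})^2}<0.
\end{equation*}
So $f$ is maximized at $P_{11}=0$, with value $\frac13$. Every column with $P_{11}=0$, in particular $(0,\frac12,\frac12)$, is a best response, and no deviation strictly improves Agent $1$'s utility.

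The only point needing care is the degenerate case $P_{11}=1$. There $P$ is reducible and the stationary distribution need not be unique. However, Agent $1$'s utility per episode is then $x_1\cdot 0=0<\frac13$ for any stationary distribution, so this case is also not profitable. For $P_{11}<1$ the chain is irreducible, since Agents $2$ and $3$ each send mass to both other agents, so the balance computation above is legitimate. By symmetry the same argument applies to Agents $2$ and $3$, which establishes the equilibrium.
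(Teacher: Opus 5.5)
Your proof is correct and follows the paper's argument. Both fix Agents $2$ and $3$, compute Agent $1$'s stationary mass as a function of its deviation, and note that the utility depends only on $P_{21}+P_{31}=1-P_{11}$, equal to $\frac{1-P_{11}}{3-2P_{11}}$ (the paper's $\frac{P_{21}+P_{31}}{2P_{21}+2P_{31}+1}$), which is maximized when Agent $1$ does no self-spending. The differences are cosmetic: you obtain $x_1$ from the balance equation at node $1$ instead of plugging into the general three-agent closed form, and you treat irreducibility and the case $P_{11}=1$ explicitly (there the stationary distribution is in fact unique, namely $(1,0,0)$, though your conclusion does not need this).
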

\begin{proof}
Due to symmetry, we only consider Agent $1$ and see whether it can change its action to increase its utility while assuming the other two agents keep their actions fixed.  Namely, $$P_{12}=P_{32}=P_{13}=P_{23}=1/2,$$ and 
$$P_{22}=P_{33}=0.$$
$$P_{22}=P_{33}=0.$$
So by the derivations in Section \ref{sec:stationaryfor3agent}, the stationary probability (mass) for Agent $1$ is

$$x_1=\frac{\frac{1}{2}\times (\frac{1}{2}+\frac{1}{2}+\frac{1}{2})}{\frac{3}{2}(P_{21}+P_{31}+\frac{1}{2})}=\frac{1}{2P_{21}+2P_{31}+1}$$.

Under this stationary mass, the utility for Agent $1$ is given by 

$$ x_1 \times (P_{21}\times 1+P_{31} \times 1)=  \frac{P_{21}+P_{31}}{2P_{21}+2P_{31}+1}.$$

This quantity is maximized when $P_{21}+P_{31}$ is maximized. Thus $P_{21}=P_{31}=\frac{1}{2}$ maximizes Agent $1$'s utility, proving that the set of actions considered achieve the Nash equilibrium. 

\end{proof}

Interestingly, we show that the following ``rotationally''-symmetric strategy is not a Nash equilibrium, even though it achieves the same utility for Agent 1 as the discussed strategy in the last theorem.

\begin{theorem}
Suppose that each agent provides unit utility for any other agent who spends one dollar purchasing its products/services, but provides $0$ utility if it purchases its own products/services.  Then the following actions for each agent do not form an equilibrium: 
(1) Agent $1$ spends all its currency on Agent $2$; (2) Agent $2$ spends all its currency on Agent $3$; (3) Agent $3$ spends all its currency on Agent $1$. 
\end{theorem}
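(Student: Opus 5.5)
To show the cyclic profile is not a Nash equilibrium, I will exhibit a profitable unilateral deviation by Agent $1$. The other two agents' strategies stay fixed: $P_{32}=1$ (Agent $2$ spends everything on Agent $3$) and $P_{13}=1$ (Agent $3$ spends everything on Agent $1$). Hence $P_{12}=P_{22}=0$ and $P_{23}=P_{33}=0$.

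First I compute Agent $1$'s baseline utility. Under the cycle, the stationary distribution is uniform, $x_1=x_2=x_3=1/3$. Agent $1$ spends all its currency on Agent $2$, which gives unit utility per dollar. So its utility per episode is $1/3$.

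Next I let Agent $1$ choose a general column $(P_{11},P_{21},P_{31})$ and compute $x_1$ from the formula derived in Section \ref{sec:stationaryfor3agent}. It is easier to do this directly from the flow balance. Agent $3$'s inflow is $P_{31}x_1+x_2$, Agent $2$'s inflow is $P_{21}x_1$, and the normalization is $x_1+x_2+x_3=1$. Balancing flows gives $x_2=P_{21}x_1$ and $x_3=x_1(P_{31}+P_{21})$. Therefore
$$x_1=\frac{1}{1+2P_{21}+P_{31}},$$
and Agent $1$'s utility is
$$V_1=\frac{P_{21}+P_{31}}{1+2P_{21}+P_{31}}.$$
This matches the baseline $1/3$ at $P_{21}=1$, as a check.

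The deviation is now clear. Spending everything on Agent $3$ ($P_{31}=1$) gives $V_1=1/2>1/3$. The reason is that money sent directly to Agent $3$ returns to Agent $1$ in one step rather than two, so Agent $1$'s stationary mass rises from $1/3$ to $1/2$ while the per-dollar utility stays at $1$. Hence the cyclic profile is not an equilibrium.

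The one point needing care is the stationary computation under the deviation. With $P_{21}=0$, Agent $2$ receives no inflow, so the chain is no longer irreducible. I would handle this by noting that Agent $2$'s currency drains to Agent $3$ and then to Agent $1$, so the chain converges to the stationary distribution $(1/2,0,1/2)$ regardless of the starting point. If irreducibility must be kept, the deviation $P_{21}=\epsilon$, $P_{31}=1-\epsilon$ gives
$$V_1=\frac{1}{2+\epsilon}>\frac{1}{3}$$
for small $\epsilon$, and this suffices.
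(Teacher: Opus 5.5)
Your proof is correct and is essentially the paper's own argument: the same unilateral deviation (Agent $1$ redirecting all its spending to Agent $3$) raises its stationary mass, and hence its utility, from $1/3$ to $1/2$. Your closed form $V_1=(P_{21}+P_{31})/(1+2P_{21}+P_{31})$, which agrees with the formula of Section~\ref{sec:stationaryfor3agent}, is a useful extra check. One small correction to your side remark: with $P_{21}=0$, the surviving chain $1\leftrightarrow 3$ has period $2$, so the currency vector oscillates rather than converging to $(1/2,0,1/2)$. That vector is the unique stationary distribution and the long-run time average, which is what the model actually uses, and your $\epsilon$-deviation is both irreducible and aperiodic, so it avoids the issue entirely.
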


\begin{proof}
One can show that the stationary mass is $$x_{1}=x_{2}=x_{3}=\frac{1}{3}.$$
The utility for Agent $1$ is $\frac{1}{3}$.
Let us consider Agent $1$, and let Agent $1$ switch its spending strategy from ``spending all the currency on Agent $2$'' to ``spending all the currency on Agent $3$''.

After the switch, there will be a new stationary distribution which can be verified as 

$$x_1=x_3=\frac{1}{2},~ x_{2}=0.$$
The new utility for Agent $1$ is $\frac{1}{2}$. Thus the earlier strategy is not a Nash equilibrium. 

\end{proof}

From the discussions above, we can see that, even though in both strategies the agents choose the agents which provide the best myopic utilities for them, they may differ in terms of whether being a Nash equilibrium.

In the next theorem, we provide an equilibrium when a player controls two agents or when two agents collaborate. 

\begin{theorem}
Suppose that Agent $1$ provides utility $c$ to Agent $2$ or Agent $3$ at their spending of one dollar.  Suppose that Agent $2$ or $3$ provides utility $b$ to Agent $1$ at Agent $1$'s spending of one dollar. 
Suppose that Agent $2$ provides utility $d$ to itself if it spends one dollar on purchasing its own products/services. Suppose that Agent $3$ provides utility $d$ to itself if it spends one dollar on purchasing its own products/services. Suppose that Agent $1$ provides utility $a$ to itself if it spends one dollar purchasing its own products/services.  We suppose that Agent $2$ ($3$) provides $0$ utility to Agent $3$ ($2$) if Agent $3$ spends one dollar on purchasing from Agent $2$ ($3$).

Suppose that Agent $2$ and Agent $3$ collaborate to maximize their combined utility or a player controls both Agents $2$ and $3$ to maximize their combined utility. We assume that Agent $2$ and Agent $3$ do not spend currency purchasing each other's products/services; and Agent $2$ and Agent $3$ spend the same proportion of their currency on purchasing from Agent $1$.

Then the following actions form an equilibrium: 
(1) Agent $1$ spends $\frac{d}{2(c-d)}$ of its currency on purchasing products/services from Agent $2$; Agent $1$ spends $\frac{d}{2(c-d)}$ of its currency on purchasing products/services from Agent $3$; Agent $1$ spends the remaining proportion on purchasing its own products/services; (2) Agent $2$ spends $\frac{a}{b-a}$ fraction on purchasing products/services from Agent $1$; Agent $2$ spends the remaining fraction on purchasing products/services from itself; (3)Agent $3$ spends $\frac{a}{b-a}$ fraction on purchasing products/services from Agent $1$; Agent $3$ spends the remaining fraction on purchasing products/services from itself. 

\end{theorem}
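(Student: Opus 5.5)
The plan is to reduce the three-agent game to the two-agent game of Section \ref{Sec:mathmodel} by merging Agents $2$ and $3$ into a single ``big'' agent, and then to invoke Proposition \ref{equilibrium_optimal}. Let Agent $1$ spend $p_2$ on Agent $2$ and $p_3$ on Agent $3$, and let Agents $2$ and $3$ each spend the common fraction $q$ on Agent $1$, with no cross purchases.

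The first step is to compute the stationary distribution, either from the formulas of Section \ref{sec:stationaryfor3agent} or directly from flow balance. Balancing the currency flow at Agent $2$ gives $p_2x_1=qx_2$, and at Agent $3$ gives $p_3x_1=qx_3$. Together with $x_1+x_2+x_3=1$, this yields
$$x_1=\frac{q}{q+p},\qquad x_2=\frac{p_2}{q+p},\qquad x_3=\frac{p_3}{q+p},$$
where $p=p_2+p_3$.

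With these in hand, the payoffs are as follows. Agent $1$ obtains $V_1=\frac{q[pb+(1-p)a]}{p+q}$, which depends on $(p_2,p_3)$ only through $p$. The coalition of Agents $2$ and $3$ obtains
$$V_{23}=\frac{p_2[qc+(1-q)d]+p_3[qc+(1-q)d]}{p+q}=\frac{p[qc+(1-q)d]}{p+q}.$$
These are exactly $V(A)$ and $V(B)$ of the two-agent model, with $A$ identified as Agent $1$ and $B$ as the coalition, under the same utility parameters $a,b,c,d$. Agent $2$ and Agent $3$ each receive one half of $V_{23}$ when $p_2=p_3$.

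I then check the two best-response conditions. For Agent $1$, with $q^*=\frac{a}{b-a}$ fixed, the derivative computation in Proposition \ref{equilibrium_optimal} shows that $dV_1/dp$ is proportional to $q^*[(b-a)q^*-a]=0$. So $V_1$ is constant in $p$, and any split, in particular $p_2=p_3=\frac{d}{2(c-d)}$, is a best response. For the coalition, which is restricted to a common $q$, $V_{23}$ depends on $q$ exactly as $V(B)$ does. Given $p^*=\frac{d}{c-d}$, its derivative is proportional to $p^*[(c-d)p^*-d]=0$, so $q^*=\frac{a}{b-a}$ is optimal.

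The main obstacle is feasibility and interpretation rather than computation. The fractions must lie in $[0,1]$, which requires $b\ge 2a$ and $c\ge 2d$, the hypothesis of Proposition \ref{equilibrium_optimal}. I would state this condition explicitly, noting that $\frac{d}{c-d}\le 1$ also makes the total spending of Agent $1$ feasible. I would also stress that the deviations checked are those allowed by the statement: the coalition choosing a common $q$ jointly, and Agent $1$ choosing any $(p_2,p_3)$. Because the objective of Agent $1$ is flat in $p$, Agent $1$ has no profitable deviation in any direction.
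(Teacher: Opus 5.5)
Your proposal is correct and follows the paper's route: you merge Agents $2$ and $3$ into a single combined agent and reuse the first-order conditions of Proposition~\ref{equilibrium_optimal}. Your flow-balance computation of $x_1=\frac{q}{p+q}$, $x_2=\frac{p_2}{p+q}$, $x_3=\frac{p_3}{p+q}$ makes explicit what the paper's brief proof leaves implicit, and so does your remark that the prescribed fractions are feasible only when $b\ge 2a$ and $c\ge 2d$.
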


\begin{proof}

We first focus on Agent $1$. From its perspective, Agent $2$ and Agent $3$ work as a single ``combined'' agent which spends $\frac{a}{b-a}$ fraction of their currency purchasing products/services from Agent $1$, and the remaining fraction on purchasing products/services from the ``combined'' agent itself. Thus, from Proposition \ref{equilibrium_optimal}, Agent $1$ cannot improve its utility by changing its current spending actions.

Moreover, due to the assumptions that Agent $2$ and Agent $3$ spend the same proportion of their money on purchasing products/services from Agent $1$; and they do not spend currency on each other,  they cannot improve their utility by changing their actions. 

\end{proof}

\begin{theorem}
Suppose that there are four agents. Two of them are ``more powerful'' AI agents, and the other two are ``less capable'' humans.  We assume that, for any buying agent,  any selling  agent from the group of ``more powerful'' agents will provide a strictly bigger utility than any selling agent from the group of ``less capable'' agents. 

Then the following is an equilibrium: 1) Agent $1$ and Agent $2$ purchase products/services from each other or from themselves; 2) Agent $3$ and Agent $4$ purchase products/services from each other or from themselves. 3) There are no purchases between any other agent pair.

 Moreover, any possible equilibrium must satisfy: 1) No ``less capable'' agent spends non-zero currency in purchasing anything from any ``more powerful'' agent; 2) No ``more powerful'' agent spends non-zero currency in purchasing anything from any ``less capable'' agent.

\end{theorem}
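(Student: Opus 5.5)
The proof has two halves, and both rest on the same tool already used in Theorem~\ref{thm:agentisolated}: \emph{cut balance} of the stationary currency flow. For any set $S$ of agents, stationarity $Px=x$ gives
\[
\sum_{i\in S}\sum_{k\notin S}P_{ki}x_i=\sum_{k\notin S}\sum_{i\in S}P_{ik}x_k .
\]
Call $M=\{1,2\}$ the ``more powerful'' agents and $L=\{3,4\}$ the ``less capable'' ones. In the existence part I will read clauses 1) and 2) as saying that the pair $\{1,2\}$, and likewise the pair $\{3,4\}$, plays one of the two-agent equilibria of Section~\ref{Sec:mathmodel}. The no-adoption equilibrium of Proposition~\ref{equilibrium0} always qualifies.

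\textbf{Existence.} Fix such a profile and let one agent deviate, say agent $1$, so that it now spends a fraction $\epsilon>0$ on agent $3$ or $4$. No agent of $L$ spends anything on $M$, so in the new stationary distribution the inflow into $M$ from outside is zero. Cut balance with $S=M$ then forces the outflow from $M$ to be zero, so $\epsilon x_1=0$ and hence $x_1=0$. Agent $1$'s utility per episode becomes $0$, which is no improvement. The same argument covers any agent of $L$ that starts buying from $M$. Deviations that stay inside the agent's own pair are ruled out by the two-agent equilibrium property. So the profile is an equilibrium.

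\textbf{Necessity, step 1: no agent of $M$ buys from $L$.} Suppose agent $1$ spends a fraction $\epsilon>0$ on some agent of $L$. I would argue as in the second part of Theorem~\ref{thm:agentisolated}: agent $1$ redirects that $\epsilon$ to agent $2$ or to itself. Every powerful seller gives agent $1$ strictly more utility per dollar than every less capable seller, so the per-dollar utility $\sum_i P_{i1}U_{i1}$ strictly increases. It remains to show that agent $1$'s stationary mass $x_1$ does not decrease under this redirection. I would do this by writing $x_1$ as inflow divided by outgoing fraction, i.e.\ $x_1=\sum_{k\neq 1}P_{1k}x_k/(1-P_{11})$. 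Redirecting to itself lowers the denominator. For the numerator I would use the explicit three/four-agent formulas of Section~\ref{sec:stationaryfor3agent}, or a Markov chain tree (Kirchhoff) expression for $x_1$, to show that cutting spending that leaves $M$ cannot reduce the mass returning to agent $1$.

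\textbf{This monotonicity is the main obstacle.} The redirection also changes the masses of the other agents, so it is not automatic. If the general monotonicity argument fails, I would fall back on the cut-balance identity itself. Removing flow from $M$ to $L$ must be matched by less flow back from $L$, and I would compare the resulting utilities directly.

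\textbf{Necessity, step 2: no agent of $L$ buys from $M$.} Once step 1 is established, nothing flows from $M$ into $L$. Cut balance with $S=L$ then forces every agent of $L$ that spends a positive fraction on $M$ to have stationary mass $0$, and hence utility $0$. Exactly as in the proof of Proposition~\ref{equilibrium_optimalwhenp=q=0}, such an agent can instead spend only on itself, keep its positive initial currency, and earn positive utility. That is a profitable deviation, so this cannot happen in an equilibrium.
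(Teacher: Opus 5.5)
\textbf{Verdict: genuine gap.} Your outline is the paper's own proof: cut balance for the existence part, agent~1 redirecting its cross-group spending to itself, and then the zero-mass argument for the less capable agents. The paper just asserts that this redirection raises agent~1's stationary mass, so you have correctly found the one nontrivial step. However, your proposal leaves that step unproved, and the route you sketch for it fails. Writing $x_1=\sum_{k\neq1}P_{1k}x_k/(1-P_{11})$ and hoping the numerator does not drop cannot work, because the numerator generally does drop, since the other agents lose mass. Already for two agents, $x_1=q/(p+q)$ rises as $p$ falls, while the inflow $q\,x_2=pq/(p+q)$ falls. Redirecting to agent~2 instead is not safe either (see below), and the fallback of ``comparing the utilities directly'' is not an argument.

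The missing idea is to express $x_1$ through return times, where agent~1's own column enters linearly. For irreducible $P$, Kac's formula gives
\[
x_1=\frac{1}{m_1},\qquad m_1=1+\sum_{k\neq 1}P_{k1}h_k ,
\]
where $h_k$ is the expected number of episodes until a dollar held by agent~$k$ first reaches agent~1. Since $h_k=1+\sum_{j\neq 1}P_{jk}h_j$, the $h_k$ depend only on the other agents' columns. Moving $\epsilon$ from $P_{31}$ to $P_{11}$ therefore lowers $m_1$ by $\epsilon h_3\ge\epsilon$. Agent~1 stays reachable from every agent, so $x_1=1/m_1$ still holds and $x_1$ strictly rises. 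Together with $U_{11}>U_{31}$, agent~1's utility strictly rises. By contrast, moving $\epsilon$ to agent~2 changes $m_1$ by $\epsilon(h_2-h_3)$, which is positive whenever currency at agent~3 reaches agent~1 sooner than currency at agent~2.

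Your Kirchhoff idea also works, provided you apply it to the ratio rather than to the inflow. Spanning trees rooted at agent~1 contain no edge out of agent~1, so their total weight is unchanged. Each other root's weight is a nonnegative combination of the entries $P_{j1}$, $j\neq 1$, and can only drop.

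Two smaller points:
\begin{itemize}
\item If the candidate profile is reducible with $x_1=0$, redirecting only $\epsilon$ may not help; use the deviation to pure self-spending instead.
\item Your step~2 claim that self-spending earns positive utility needs $U_{33}>0$, which the hypothesis $U_{13}>U_{33}\ge 0$ does not give. If $U_{33}=0$, consider the profile where agents $1,2,4$ self-spend and agent~3 spends everything on agent~1. Agent~3 earns zero whatever it does, and you can check this profile is an equilibrium. So claim~1) of the necessity part needs a positivity assumption like the one in Proposition~\ref{equilibrium_optimalwhenp=q=0}.
\end{itemize}
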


\begin{proof}

We start with proving the first part of this theorem. 
First of all, any ``more powerful'' agent will not purchase from any ``less capable'' agent. This is due to the same reason as explained in the proof of Theorem \ref{thm:agentisolated}:  A ``more powerful'' Agent $1$ (or $2$) purchasing form less capable agents while the ``less capable'' agents do not buy from any of the ``more powerful'' agents will result in the stationary probability or mass for Agent $1$  getting to $0$. This will in turn lead to a $0$ utility for the ``more powerful'' agent, not increasing the utility for Agent $1$. 

Secondly, any ``less capable'' agent (Agent 3, say) will not purchase from any of the ``more powerful'' agents. This is because of the same reason: due to ``more powerful'' agents not purchasing from any ``less capable'' agents, Agent $3$ purchasing from any ``more powerful'' agent will result in its stationary mass being set to $0$, not improving its utility. 

Now we prove the second part of this theorem.  We first argue that any ``more powerful'' agent will not purchase from the ``less capable'' agent. Suppose instead one ``more powerful'' agent (say, Agent $1$) purchases products/services from the ``less capable'' agents. Then Agent $1$ can switch all such spending on the ``less capable'' agents to spending on itself. In this switch, both Agent $1$'s stationary mass and its utility per dollar will increase, thus leading to a higher asymptotic utility.  This proves the second claim. 

Since no ``more powerful'' agent spends on any ``less capable'' agent, no ``less capable'' agent will spend on any ``more powerful'' agent because doing so will lead to its stationary mass being $0$. 

\end{proof}

In the equilibria above, one may wonder why the optimal strategy of an agent is not to purchase from the most powerful supplier agent who provides the highest utility per dollar spent on it.  This is because in our model, the spending pattern or strategy of an agent not only affects its instant utility obtained from spending, but also affects the stationary distribution.  In the following, we give an example where the optimal strategy for an agent is not to go for the most powerful supplier.  

\subsection{An example where an agent does not go for the most powerful supplier in its optimal strategy}
We consider a three-agent network for simplicity (even though the method and codes of calculating the optimal strategies work for general $n$). In this network, we focus on optimizing the spending strategy for Agent $1$, given that Agents $2$'s and $3$'s spending strategies are fixed.  We consider the following Markov chain matrix:
$$
P=\begin{pmatrix}
    P_{11} & P_{12} & P_{13}\\
    P_{21} & P_{22} & P_{23}\\
    P_{31} & P_{32} & P_{33}
\end{pmatrix},  
$$
where we need to optimize $P_{11}$, $P_{21}$ and $P_{31}$. Here $P_{ij}$ means the proportion of Agent $j$'s currency spent on purchasing Agent $i$'s products/services.  In our experiment, we have the following numerical values for the last two columns of $P$:
$$
P=\begin{pmatrix}
    P_{11} & \alpha & 0.5\\
    P_{21} & 1-\alpha-0.02 & 0.01\\
    P_{31} & 0.02 & 0.49
\end{pmatrix},  
$$
where $\alpha$ is a tunable parameter for Agent $2$. The larger $\alpha$ is, more willing is Agent $2$ to spend their currency on purchasing Agent $1$'s products and services. 
In this example, we take $\alpha=0.01$.

We take the following $U$ matrix

$$
U=\begin{pmatrix}
    0 & 1 & 1\\
    9.8 & 0 & 1\\
    1 & 1 & 2
\end{pmatrix}.
$$

Even though Agent $2$ provides a much higher utility (9.8) for each dollar spent by Agent $1$, the optimized strategy of Agent $1$ is to spend all its currency on Agent $3$. This is because this strategy will give a bigger stationary distribution to Agent $1$. We note that Agent $3$ spends a much larger proportion of its currency on purchasing from Agent $1$ than Agent $2$ does.

\label{numerical}


\section{Conclusions and Future Directions}
\label{sec: conclusions}
In this paper, we consider the equilibria in economy of multiple agents, including human agents and AI agents. We derived the equilibria based on a Markov chain stationary distribution based model of trading products/services, in which we consider the long-term utility generated by economic activities instead of their short-term benefits. For the model consisting of two agents, we fully characterize all the possible economic equilibria and the corresponding conditions for them to happen. Interestingly, we show that unless each agent can at least double (not merely increase) its marginal utility by purchasing the other agent's products/services, purchasing the other agent's products/services will not happen in any economic equilibrium.  We further extend our results to three and more agents, where we characterize more economic equilibria. We find that in some equilibria, the  ``more powerful'' AI agents contribute $0$ utility to ``less capable'' agents.  

Our results show that the benefits of adopting AI agents depend  not only  on their capabilities/productivities but also on the graph structure of the producer-consumer network consisting of human and AI agents as nodes.   

In future work, we will extend the results to consider specific and different types of products/services provided by each agent. Besides asymptotic utility per episode, we will also consider the discounted utility which balances between long-term and short-term utility.  It is also interesting to investigate the scenario where the total amount of currency increases due to central bank policies and increase in overall productivity.  In this paper, we only consider the scenario where each agent acts autonomously, and in future works, we can consider the scenario where certain players can control a set of agents or where several agents can collaborate and have utilities as functions of AI agents' joint actions and joint spendings. We would like to remark that our method of using the Markov chain probability transition matrix to model economic activities can potentially be applied to other research topics such as international trade and investments, tariff policies, economic and financial sanctions in interconnected economies, and buy-local economic policies.  We note that in these topics, economic activities can be described by an economic interaction network where the involved entities are network nodes.


\clearpage
\begin{spacing}{1.6}
\typeout{[Bibliography]}
\bibliography{Reference_20260227,Reference_NullSpaceCondition}
\end{spacing}
\end{document}